\documentclass[10pt]{llncs}

\usepackage{algorithm}
\usepackage{algorithmic}
\usepackage{amssymb}
\usepackage{xspace}
\bibliographystyle{plain}

\usepackage{url}

\iffalse

%\setlength{\textwidth}{12.2cm}  ORIGINAL

%\renewcommand{\textwidth}{13.2cm}

%\setlength{\textheight}{19.3cm}   ORIGINAL
%\renewcommand{\textheight}{21.3cm}
\textheight 21.4cm
\textwidth 12.7cm
\fi

%\renewcommand{\baselinestretch}{0.8}

\newtheorem{observation}{Observation}

\usepackage{graphicx}
\usepackage{color}

\newcommand{\deps}{$\ \textsf{dep} \ $}

\begin{document}
\title{GentleRain+: Making GentleRain Robust on Clock Anomalies}

\author{Mohammad Roohitavaf \and Sandeep S. Kulkarni}

\institute{
Department of Computer Science and Engineering\\
Michigan State University\\
East Lansing, MI 48824, USA\\
Email: \url{\{roohitav,sandeep\}@cse.msu.edu}
}

\maketitle

\begin{abstract}
%Geo-replicated data stores enable many online applications such as social networks and ecommerce websites. To maximize customer satisfaction, these systems must be always available, and complete their operations with low latency. According to the CAP theorem, achieving strong consistency together with availability is impossible in presence of network partitions. Thus, to achieve high availability and high performance, many today's systems forget strong consistency, and use eventual consistency model that may expose inconsistencies to the clients and make the application logic complex.
 
Causal consistency is in an intermediate consistency model that can be achieved together with high availability and high performance requirements even in presence of network partitions. There are several proposals in the literature for causally consistent data stores. Thanks to the use of single scalar physical clocks, GentleRain has a throughput higher  than other proposals such as COPS or Orbe. However, both of its correctness and performance relay on monotonic synchronized physical clocks. Specifically, if physical clocks go backward its correctness is violated. In addition, GentleRain is sensitive on the clock synchronization, and clock skew may slow write operations in GenlteRain. In this paper, we want to solve this issue in GenlteRain by using Hybrid Logical Clock (HLC) instead of physical clocks. Using HLC, GentleRain protocl is not sensitive on the clock skew anymore. In addition, even if clocks go backward, the correctness of the system is not violated. Furthermore, by HLC, we timestamp versions with a clock very close to the physical clocks. Thus, we can take causally consistency snapshot of the system at any give physical time. 
%In addition, by keeping track of dependencies for different data centers separately, we can avoid unnecessary wait for receiving update from datacenter unrelated on a version, thereby reducing the update visibility latency. 
%
We call GentleRain protocol with HLCs GentleRain+. We have implemented GentleRain+ protocol, and have evaluated it experimentally. GentleRain+ provides faster write operations compare to GentleRain that rely solely on physical clocks to achieve causal consistency. We have also shown that using HLC instead of physical clock does not have any overhead. Thus, it makes GentleRain more robust on clock anomalies at no cost. 
\end{abstract}
 
\section{Introduction}

Distributed data stores are one of the integral parts of today's Internet services. Service providers usually replicate their data on different nodes worldwide to achieve higher performance and durability. However, when we use this approach, the consistency between replicas becomes a concern. In an ideal situation, all replicas always represent exactly the same data. In other words, any update to any data item instantaneously becomes visible in all replicas. This model of consistency is called \textit{strong consistency}. Strong consistency provides a simple semantics for programmers who want to use the distributed data store for their applications. Unfortunately, it is impossible to achieve strong consistency without sacrificing the availability when we have network partitions. In particular, in case of network partitions, to maintain the consistency between different copies of data items in different replicas, we have to make updates unavailable. The CAP theorem \cite{cap} implies that we can only choose two requirements out of strong consistency, availability, and partition-tolerance.

At the other end, the weakest consistency model is called \textit{eventual consistency} \cite{ec}. In this consistency model, as the name suggests, the only guarantee is that replicas become consistent "eventually". We can implement always-available services under this consistency model. However, it may expose clients to anomalies, and application programs need to consider such anomalies when they program their applications. To understand how eventual consistency may leads to anomaly, consider the following example from \cite{cops}: Suppose in a social network, Alice changes the status of an album to friends-only. Then, she uploads a personal photo to her album. Under eventual consistency model, the photo may become visible in a remote node before album status. In that case, a user querying a remote node may see Alice's personal photo without being her friend. Despite such anomalies, some distributed data stores use eventual consistency. One example is Dynamo \cite{dynamo} which is used by Amazon to manage the state of some of its services with very high availability requirements.

\textit{Causal consistency} is an intermediate consistency model. It is weaker than strong consistency, but stronger than eventual consistency. In particular, causal consistency requires that the effect of an event can be visible only when the effect of its causal dependencies is visible. The causal dependency captures the notion of happens-before relation define in \cite{lamport}. Under this relation, any event by a client depends on all previous events by that client. Thus, in our example, the event of uploading the photo depends on the event of changing the album status to friends-only. Thus, nobody can view Alice's photo before finding her album as friends-only. Causal consistency (with some restrictions as we explain in Section  \ref{sec:background}) is achievable with availability even in presence of networks partitions. 

COPS \cite{cops} is one of the first proposals in the community for causally consistent data stores. COPS explicitly maintains the list of dependencies for each version. Then, in a remote replica, we do not make a version visible if some of its dependencies are missing. To check weather all of dependencies of an update are present in the data center, servers communicate with each other. The problem is that the overhead of this explicit dependency checking is high, and it grows as the number of dependencies or the number of serves inside the data center grows. Orbe \cite{orbe} solves the problem of maintaining an explicit list of dependencies by using dependency matrices, but it still suffers from sending dependency check messages to other partitions. 

GentleRain \cite{gentleRain} uses a different approach in a sense that it does not send any dependency check message. Instead, it guarantees causal consistent by using physical clocks. In this way, partitions communicate with each other only \textit{periodically}, and they do not need to communicate upon receiving each new update, unlike COPS or Orbe. When compared with COPS and Orbe, this reduces the message complexity of GentleRain significantly thereby providing higher throughput. 

Although GentleRain reduces the overhead of tracking depenendencies, it relies on synchronized and monotonic clocks for both correctness and performance. Specifically, it requires that clocks are strictly increasing. This may be hard to guarantee if the underlying service such as NTP causes non-monotonic updates to POSIX time \cite{NTPbad} or suffers from leap seconds \cite{leapsecond,leapsecond2}. In addition, the clock skew between physical clocks of servers may leads to cases where GentleRain deliberately delays write operations. 

This issue is made worse if we have a federated data center. In other words, all the data in the data center is not controlled by the same entity.  Such a situation can arise when subsets of data are {\em controlled} by multiple entities. An example of such federated data center occurs when you have multiple departments in a university that contribute to the data in the data center. In this case, each partition may consist of data of individual department. A client, with appropriate authorization, may be able to access data from multiple/all departments. However, since the partitions are controlled by different entities, enforcing constraints such as tight clock synchronization is difficult. We expect this to be an important issue when the data center is obtained by data from multiple sources and while each source is willing to provide access (via GET/PUT etc) they are not willing to handover the whole data to each other.

Our goal in this paper is making GentleRain more robust on clock anomalies such as clock skew or going backward. To achieve this goal, we focus on the use of hybrid logical clocks \cite{hlc} that combine the logical clocks \cite{lamport} and physical clocks. In particular, these clocks assign a timestamp $hlc.e$ to event $e$ such that if $e$ happened before $f$ (as defined in \cite{lamport}), then $hlc.e < hlc.f$ is true. Furthermore the value of $hlc$ is very close to the physical clock and is backward compatible with NTP clocks \cite{ntp}. In particular, in \cite{hlc}, it is shown that one can replace the last 16 bits of the NTP clock (that is 64 bits long) by data necessary to model HLC. Moreover, the remaining 48 bits provide precision of microseconds. Thus, an application that uses NTP can rely on HLC without violating its correctness. 

Another aspect that we have focused on in this paper is the locality of client requests. Specifically, in GentlRain, clients can only access their local data center. Other systems like COPS also has made that assumption. We provide an impossibility results that proves this assumption is indeed required in any causally consistent data store that makes local updates visible immediately. 

\iffalse
We also focus on changing how GentleRain computes its causal dependencies. In particular, we show that by increasing the size of each message slightly (without increasing the number of messages), it is possible to reduce the update visibility of GentleRain. With our changes, a slow replica cannot hold updates at other replicas hostage. 
\fi

{\bf Contributions of the paper. } \
\begin{itemize}
\item We show that we can improve the latency of PUT operations in GentleRain with the use of HLC. 
%\item We show how our approach can reduce the update visibility thereby preventing a slow replica from affecting other replicas that can communicate efficiently. 
\item We demonstrate the efficiency provided by our approach by simulating clock skews as well as by performing experiments on cloud services provided by Amazon. 
\item We demonstrate how one can remove the reliance of GentleRain on clock synchronization service.
\item We show using HCL does not have any overhead. 
\item We provide an impossibility result that shows locality of traffic is necessary for a causally consistent data stores that makes local updates visible immediately. 
\end{itemize}

{\bf Organization of the paper. } \
In Section \ref{sec:background}, we define our system architecture and the notion of causal consistency. In Section \ref{sec:motivation}, we discuss, in detail, the issues of GentleRain that we want to address. In Section \ref{sec:hlc}, we provide a brief overview of hybrid logical clocks from \cite{hlc}. Section \ref{sec:protocol} provides our GentleRain+ and Section \ref{sec:results} provides our experimental results. We provide an impossibility result in Section \ref{sec:impossibility}. In Section \ref{sec:discussion} we discuss why HLC is preferred over physical or logical clocks. 
%We provide related work in Section \ref{sec:related}. 
Finally, Section \ref{sec:conclusion} conclude the papers.

\iffalse
@Misc{leapsecond,
key = {leapsecond},
OPTauthor = {},
title = {The Future of Leap Seconds},
howpublished = {\url{http://www.ucolick.org/~sla/leapsecs/onlinebib.html}},
OPTmonth = {},
OPTyear = {},
OPTnote = {},
OPTannote = {}
}

@Misc{leapsecond2,
key = {leapsecond2},
OPTauthor = {},
title = {Another round of Leapocalypse},
howpublished = {\url{http://www.itworld.com/security/288302/another-round-leapocalypse}},
OPTmonth = {},
OPTyear = {},
OPTnote = {},
OPTannote = {}
}

@Misc{NTPbad,
key = {NTPbad},
author = {K. Kingsbury},
title = {The trouble with timestamps},
howpublished = {\url{http://aphyr.com/posts/299-the-trouble-with-timestamps}},
OPTmonth = {},
OPTyear = {},
OPTnote = {},
OPTannote = {}
}

\fi

\iffalse
However, GentleRain suffers from two issues: 1) it relays on synchronized monotonic clocks for both its correctness and performance; and 2) it has high update visibility latency roughly equal to the longest network travel time between data centers. 

??? 

Cannot handle issues such as NTP kinks  leap sencods,

Facebook.

 Update visibility

Our goal is to overcome these limitations while preserving the advantages of GentleRain. We will do this with the new clocks proposed in opodis 2014.

Properties of HLC (?)

Contributions of the paper.

\fi

\section{Background}
\label{sec:background}

\subsection{Architecture}
\label{sec:arch}
In this section we focus on the system architecture and assumptions that are the same as those assumed in \cite{cops}, \cite{orbe}, and \cite{gentleRain}. We consider data stores that are both partitioned and replicated. Partitioning a data stores means instead of storing all data in a single machine, storing it on several machines. Using partitioning we can have a scalable data store that is not limited by the capacity of a single machine, and can store more data by simply adding more machines. Replication, on the other hand, helps us to increase both performance and durability of our data. Specifically, by replicating data in geographically different locations, we let the clients query their local data center for reading or writing data thereby reducing the response time for client operations. In addition, by replicating data, we increase the durability of our data, as if we lose the data in some replica, we have the data in other data centers. Thus, a data store consists of $N$ partitions each of which is replicated in $M$ replicas. 

Figure \ref{fig:arch} show a typical architecture of a partitioned and replicated data store. The data is replicated in different data centers in geographically different locations, and inside each data center, data is partitioned over several servers. 
Note that the partitioning of each replica into partitions is logical. We make no assumptions about physical location of each partition. For example, in an university setting, each partition may consist of data associated with one department. All these partitions form one key-value store. Thus, each partition in the key-value store could be administered independently as long as it provides the necessary functionality of GET and PUT operations.  In Figure \ref{fig:arch}, data center $C$ is a logical data center that uses servers provided by two different departments/providers.

We assume multi-version key-value stores that store several versions for each key. A key-value store has two basic operations: $PUT(k, val)$ and $GET(k)$, where  $PUT(k, val)$ writes new version with value $val$ for item with key $k$, and  $GET(k)$ reads the value of a item with key $k$.

\begin{figure}
\begin{center}
\includegraphics[scale=0.6]{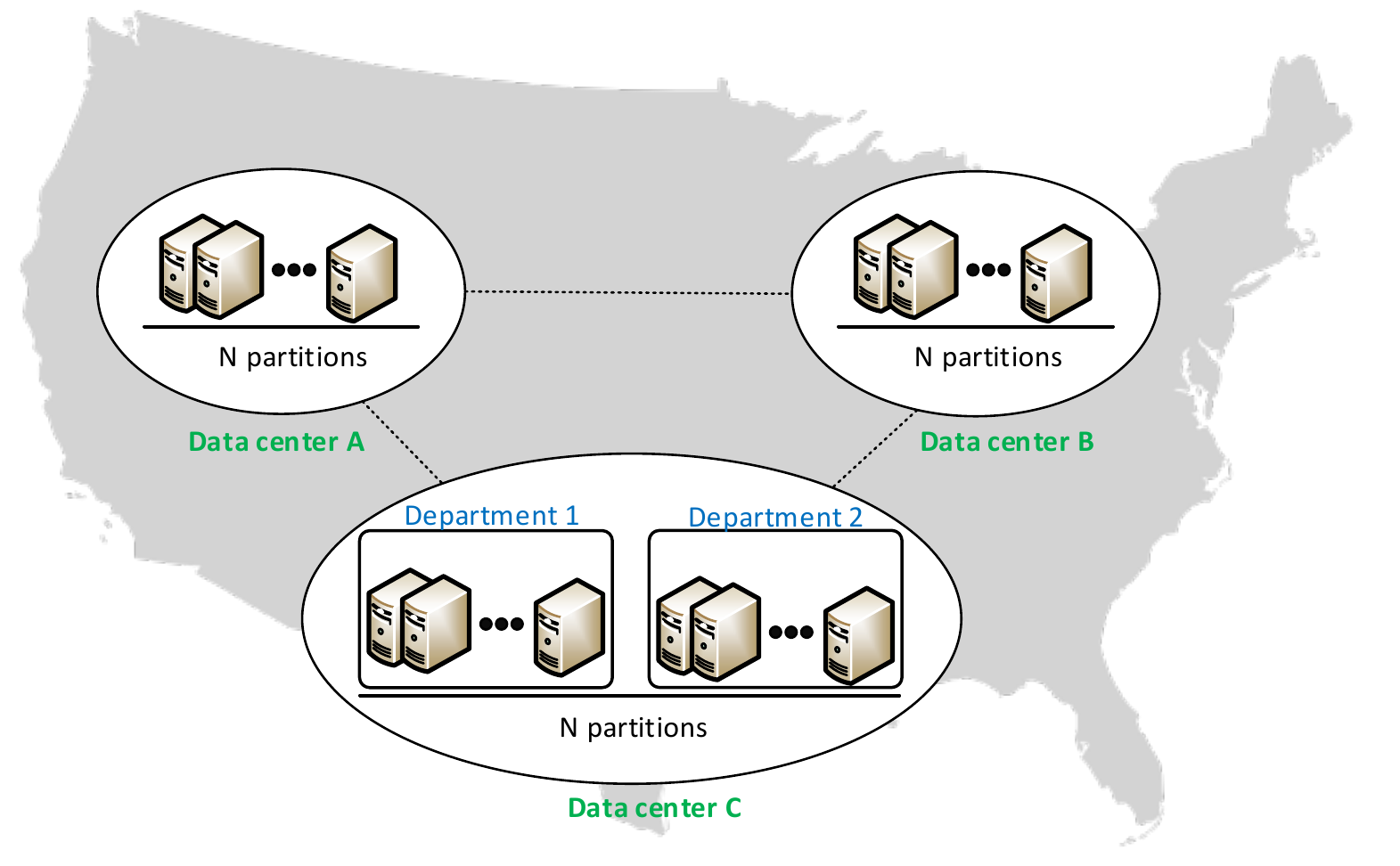}
\caption{A typical architecture of a Geo-replicated partitioned data store. Each data center is partitioned into N servers. Data center C is a logical data center consists of servers managed by different departments/providers.}
\label{fig:arch}
%\label{addStab}
\end{center}
\end{figure}

\subsection{Causal Consistency}

\iffalse
In the architecture described in Section \ref{sec:arch}, when a client writes a new version for a key in one replica, this version should be replicated to other replicas, so other clients in other locations can read the version. In an ideal situation, we want \textit{strong consistency} where all version are visible in all replicas instantaneously and read operation always return the most recent value. Unfortunately, according to the CAP theorem \cite{cap}, achieving this level of consistency together with availability (i.e, all operations are answered in finite time) is impossible in presence of network partitions. On the other extreme, we have eventual consistency \cite{ec} that provides high performance and availability, and is used in industry such as in Amazon Dynamo \cite{dynamo}. However, eventual consistency exposes clients to the inconsistencies, and make the program logic complex for the application developers.
 
Causal consistency is an intermediate consistency model that we can be achieved together with availability even in presence of network partitions. Although causal consistency is weaker than strong consistency, it makes the program logic significantly simpler, as it prevents many inconsistencies from occurring. 
\fi

Causal consistency is defined based on the happens-before relation between events \cite{lamport}.  In the context of key-value stores, we define happens-before relation as follows: 

\begin{definition} [Happens-before]
\label{def:happens}
Let $a$ and $b$ be two events. We say $a$ happens before $b$, and denote it as $a \rightarrow b$ iff: 
\begin{itemize}
\item $a$ and $b$ are two events by the same client (i.e., in a single thread of execution), and $a$ happens earlier than $b$, or
\item $a$ is a $PUT(k, val)$ operation and $b$ is a $GET(k)$ operation and $GET(k)$ returns the value written by $a$, or
\item there is another event $c$ such that $a \rightarrow c$ and $c \rightarrow b$. 
\end{itemize}
\end{definition}

Now, we define casual dependency as follows: 

\begin{definition} [Causal Dependency]
Let $v_1$ be a version of key $k_1$, and $v_2$ be a version of key $k_2$. We say $v_1$ causally depends on $v_2$, and denote it as $v_1 \deps v_2$ iff $PUT(k_2, v_2) \rightarrow PUT(k_1, v_1)$. 
\end{definition}

%We call two versions conflicting if they are for the same key and none of them is causally dependent on the other between them. Specifically, we have: 

%We ... 

\begin{definition} [Visibility]
We say version $v$ of key $k$ is visible to a client $c$ iff $GET(k)$ performed by client $c$ returns $v'$ such that $v = v'$ or $v \rightarrow v'$.  
\end{definition}

Now, we define causal consistency as follows: 

\begin{definition} [Causal Consistency]
Let $k_1$ and $k_2$ be any two arbitrary keys in the store. Let $v_1$ be a version of key $k_1$, and $v_2$ be a version of key $k_2$ such that $v_1 \deps v_2$. The store is causally consistent if whenever $v_1$ is visible to client $c$, $v_2$ is also visible to client $c$.
\end{definition}

In the above definition we ignore the possibly of conflicts in writes. Conflicts occur when we have two writes on the same key such that there is no causal dependency relation between them. Specifically, 

\begin{definition} [Conflict]
Let $v_1$ and $v_2$ be two versions for key $k$. We call $v_1$ and $v_2$ are conflicting iff $\neg (v_1 \deps v_2)$ and $\neg (v_2 \deps v_1)$. (i.e., none of them depends on the other.)
\end{definition}

In case of conflict, we want a function that resolves the conflict. Thus, we define conflict resolution function as $f(v_1, v_2)$ that returns one of $v_1$ and $v_2$ as the winner version. Of course, if $v_1$ and $v_2$ are not conflicting, $f$ returns the latest version, i.e., if $v_1 \deps v_2$ then $f(v_1, v_2) = v_1$. For such function $f$, 
Lloyd et al. \cite{cops} have named causal consistency with conflict resolution causal+ consistency. We formalize causal+ consistency as follows: 

\begin{definition} [Causal+ Consistency]
Let $k_1$ and $k_2$ be any two arbitrary keys in the store. Let $v_1$ be a version of key $k_1$, and $v_2$ be a version of key $k_2$ such that $v_1 \deps v_2$. Let $f$ be a conflict resolution function. The store is causal+ consistent for conflict resolution function $f$ if whenever $v_1$ is visible to client $c$, then for key $k_2$, version $v_2'$ is visible to client $c$ as well where $v_2' = v_2$ or $f(v_2, v_2') = v_2'$. 
\end{definition}

A trivial solution for a causally (or causal+) consistent data store is a data store that always returns the initial value for each key. Of course, such data store is not desirable. Thus, we define causal++ consistency that requires the data store to make all local updates visible to clients immediately. Specifically, 

\begin{definition} [Causal++ Consistency]
\label{def:causal++}
A store is causal++ consistent for conflict resolution function $f$ if following conditions are hold:  

\begin{itemize}
\item it is causal+ consistent for conflict resolution function $f$, and
\item If client $c$ writes version $v_1$ for key $k$ on replica on $r$, then version $v_2$ for key $k$ is \textit{immediately} visible to any client that accesses replica $r$ where $v_2 = v_1$ or $f(v_1, v_2) = v_2$.

%$v_1 $v_1$ or $v_2$ such that $f(v_1, v_2) = v_2$ is immediately visible to any client accessing replica $r$.
\end{itemize}
\end{definition}

%We have the following observation that is essential for the correctness of our protocol so as that of previous works \cite{??, ??}. 

In practice, in addition to the consistency, we want all replicas eventually converge. In other words, we want a write in a replica to reflect in other connected replicas as well. Thus, we define convergence as follows: 

\begin{definition} [Convergence] 
A store is convergent if the following condition is hold: 
 
if client $c$ writes version $v_1$ for key $k$ on replica $r$, then version $v_2$ for key $k$ is \textit{eventually} visible to any client that accesses any replica connected to $r$ where $v_2 = v_1$ or $f(v_1, v_2) = v_2$.
\end{definition}

Note that, current proposals for causally consistent data store like \cite{cops}, \cite{orbe}, and \cite{gentleRain} are infact convergent and causal++ consistent. However, they have assumed that clients do not change their replicas. In Section \ref{sec:impossibility}, we provide an impossibility result that shows achieving causal++ consistency is impossible if clients access different replicas for the same items.

In the next Section, we focus on our motivation for improving GentleRain. 

\iffalse
In the absence of conflict resolutions,

\begin{observation}
If clients never change their replicas, then any version created in replica $r$, may not depend on a version not visible for client accessing replica $r$.
\end{observation}

If v1 depend v2 and v1 is visible to client then v2' where .. is visible to client
\fi

\iffalse
According to Observation \ref{??}, we can conclude the following observation for causal++ consistency: 

\begin{observation}
If clients never change their replicas, then any causal++ consistent data store for conflict resolution function $f$ is a causal+ consistent data store conflict resolution function $f$.
\end{observation}
\fi

\section{Motivation}
\label{sec:motivation}

Thanks to the use of clocks for dependency checking, GentleRain \cite{gentleRain} has a throughput higher than that of other causally consistent data stores proposed in the literature such as COPS \cite{cops} or Orbe \cite{orbe}. In particular, GentleRain avoids sending dependency check messages unlike other systems which leads to a light message complexity which in turn leads to higher throughput. In the following, we review the main parts of GentleRain protocol.

%However, GentleRain suffers from two important problems: 1) both its correctness and performance heavily depend on the accuracy of physical clocks and clock synchronization between servers, and 2) it has high update visibility latency equals to longest network travel time between data centers. To understand why GentleRain suffers from these problems, we briefly review the main aspects of GentleRain, next.

\textbf{GentleRain}. GentleRain assigns each version a timestamp equal to the physical clock of the partition where the write of the version occurs. We denote the timestamp assigned to version $X$ by $X.t$. GentleRain assigns timestamps such that following condition is satisfied:

\begin{itemize}
\item [$C1:$] If version $X$ of object $x$ depends on version $Y$ of object $y$, then $Y.t < X.t$.
\end{itemize} 

%This implies that when a new version of $X$ is created, if the server responsible for creating this version has not reached the clock value $Y.t$ then the creation of this new version must be delayed 

%In each data center, GentleRain maintains a variable called Global Stable Time (GST). For sake of space, the details of how each node computes GST is not provided in this paper, but basically, 

Also, each node in the data center periodically computes a variables called  Global Stable Time (GST) (through communication with other partitions) such that following condition is satisfied: 
\begin{itemize}
\item [$C2:$] When GST in a node has a certain value $T$, then all versions with timestamps smaller than or equal to $T$ are visible in the data center.
\end{itemize}

When a client preforms $GET(k)$, the partition storing $k$, returns the newest version $v$ of $k$ which is either created locally, or has a timestamp no greater than GST. According to conditions $C1$ and $C2$ defined above, any version which is a dependency of $v$ is visible in the local data center and causal consistency is satisfied.

\textbf{Problem: Sensitivity on physical clock and clock synchronization.} To satisfy condition $C1$, in some cases, we may have to wait before creating a new version. Specifically, in order to create a new version for a key, the client sends the timestamp $t$ of the last version that it has read/written together with the PUT operation. The partition receiving this request first waits until its physical clock is higher than $t$, and then creates the new version. The amount of this wait, as we observed in our experiments, is proportional to the clock skew between servers. In other words, as the physical clocks of servers drift from each other, the incidence and the amount of this wait period increases. This makes GentleRain sensitive to clock synchronization between servers, and a drift between clocks of the servers may have a direct negative impact on the response time for the clients operations. This problem becomes more significant when we create a federated data center by aggregating servers managed by different authorities, or even in different locations, as in this case the clock skew between servers is expected to be higher. 

In addition to the waiting issue, physical clock must be monotonic for GentleRain to be correct. In other words, if one of the physical clocks goes backward, then the correctness of GentleRain is not guaranteed, and causal consistency guarantee is violated. To understand how that can happen consider this example: consider a system consisting of two data centers $A$ and $B$. Suppose GSTs in both data centers are 6. That means, both data centers assume all versions with timestamp smaller than 6 are visible (condition $C2$). Now, suppose the physical clock of one of the servers in data center $A$ goes backward to 5. In this situation, if a client writes a new version at that server, condition $C2$ is violated, as the version with timestamp 5 has not arrived in data center $B$, but GST is 6 which is higher than 5. Since correctness of GentleRain relays on condition $C2$, by violating $C2$, correctness of GentleRain is violated. 

As we explained above, both correctness and performance of GentleRain relay on the accuracy of the physical clocks, and the clock synchronization between servers. We want to eliminate this reliance on physical clocks, and make GentleRain more robust by using HLCs. In the section \ref{sec:hlc}, we overview HLC, and then in Section \ref{sec:protocol} we focus on using HLC in GentleRain. 

\section{Hybrid Logical Clocks}
\label{sec:hlc}

In this section, we recall HLCs from \cite{hlc}. HLC combines logical and physical clocks to leverage key benefits of both. We focus on the benefits of using HLCs in Section \ref{sec:discussion}.  A HLC timestamp of event $e$ has two elements: $l.e$ and $c.e$. $l.e$ is the value of the physical clock, and represents our best approximation of the global time when $e$ occurs. And, $c.e$ is a \textit{bounded} counter that is used to capture causality whenever $l.e$ is not enough to capture causality. Specifically, if we have two events $e$ and $f$ such that $e$ happens-before $f$ (see Definition \ref{def:happens}), and $l.e = l.f$, to capture causality between $e$ and $f$, we set $c.e$ to a value higher than $c.f$. Although, we increase $c$, as it is proved in \cite{hlc}, the theoretical maximum value of $c$ is $O(n)$. In practice, this value remains very small. For completeness, we recall algorithm of HLC from \cite{hlc} below. 

\begin{algorithm} 
{
\caption{HLC algorithm from \cite{hlc} }
\label{alg:hlc}
\begin{algorithmic} [1]
\STATE \textbf{Send/Local Event} 
\STATE \hspace{3mm} $l'.a = l.a$
\STATE \hspace{3mm} $l.a = max(l'.a, pt.a)$ //tracking maximum time event, $pt.a$ is physical time at a
\STATE \hspace{3mm} \textbf{if} $(l.a = l'.a) \ c.a = c.a + 1$ //tracking causality
\STATE \hspace{6mm} \textbf{else} $c.a = 0$
\STATE \hspace{3mm} Timestamp event with $l.a,c.a$

\STATE \vspace{5mm} \textbf{Receiving message $m$} 
\STATE \hspace{3mm}  $l'.a = l.a$
\STATE \hspace{3mm}  $l.a = max(l'.a, l.m, pt.a)$ //$l.m$ is $l$ value in the timestamp of the message received
\STATE \hspace{3mm}  \textbf{if} $(l.a = l'.a = l.m)$ then $c.a = max(c.a, c.m) + 1$
\STATE \hspace{3mm}  \textbf{else if} $(l.a = l'.a)$ then $c.a = c.a + 1$
\STATE \hspace{3mm}  \textbf{else if} $(l.a = l.m)$ then $c.a := c.m + 1$
\STATE \hspace{3mm}  \textbf{else} $c.a = 0$
\STATE \hspace{3mm}  Timestamp event with $l.a,c.a$
\end{algorithmic}
}
\end{algorithm}

HLC satisfies logical clock property, that it allows us to capture (one-way) causality between two events. Specifically, if $e$ happens-before $f$, then $HLC.e < HLC.f$, where $HLC.e < HLC.e$ iff $(l.e < l.f \vee ((l.e = l.f) \wedge c.e < c.f)))$.  This implies that if $HLC.e = HLC.f$, then $e$ and $f$  are (causally) concurrent. At the same time, just like physical clock, HLC increases spontaneously, and it is close to the physical clock. Thus, it can be used to take snapshot at a given physical time.  

\section{GentleRain+ Protocol}
\label{sec:protocol}
GentalRain+ is basically GentleRain protocol that uses HLC instead of physical clock. For completeness of algorithms, we provide the algorithms for the basic protocol that supports PUT and GET operation. However, we assume the reader is familiar with GentleRain protocol \cite{gentleRain}. Here, we mostly focus on the use of HLC in the protocol. The protocol for transactions (GET-SNAPSHOT and GET-ROTX) is the exactly the same as that of in GentleRain. 

\subsection{Client Side}

The client side of GentleRain+ is identical to the client side of GentleRain,  except all timestamps and clock values are HLC values. Table \ref{table:clientV} shows variables maintained by a client. A client maintains a dependency time $DT_c$ that is the value of the maximum timestamp of all items accessed by the client. A client also maintains $GST_c$ that is the the maximum GST value that the client is aware of. We recall the client side algorithm from \cite{gentleRain} in Algorithm \ref{alg:client}.

\begin{center} 
\begin{table} [h]
\caption{Variables maintained by client $c$}
\label{table:clientV}
\begin{tabular}{ |l|l|l| } 

\hline
\textbf{Variable} & \textbf{Definition} \\
\hline
$DT_c$ & dependency time at client $c$ \\ 
$GST_c$ & global stable time known by client $c$ \\ 
\hline
\end{tabular}
\end{table}
\end{center}

\iffalse
\begin{algorithm} 
{
\caption{Client operations at client $c$.}
\label{alg:client}
\begin{algorithmic} [1]

%\STATE // 
%\STATE 
\STATE \textbf{GET} (key $k$) %\COMMENT{ }

\STATE \hspace{3mm}  send $\langle \textsc{GetReq} \ k\rangle$  to server
\STATE \hspace{3mm}  receive $\langle \textsc{GetReply} \ v, sr, ut\rangle$
%\STATE \hspace{3mm}   $DT_c \leftarrow max(DT_c,ut)$
\STATE \hspace{3mm}  $DV_c[sr] \leftarrow max (DV_c[sr], ut)$
%\STATE \hspace{3mm}  $GST_c \leftarrow max(GST_c,gst)$
\RETURN $v$

\STATE \vspace{5mm} \textbf{PUT} (key $k$, value $v$)
\STATE \hspace{3mm} send $\langle \textsc{PutReq} \ k,v,DV_c \rangle$ to server
\STATE \hspace{3mm} receive $\langle \textsc{PutReply} \ ut, sr \rangle$ 
\STATE \hspace{3mm} $DV_c[sr] \leftarrow ut$

\end{algorithmic}
}
\end{algorithm}
\fi

%Alogithm of GentleRain with HLC
\begin{algorithm} 
{
\caption{Client operations at client $c$ (These functions are identical to those of \cite{gentleRain} except that GST, gst, DT have HLC values instead of physical clock values) }
\label{alg:client}
\begin{algorithmic} [1]

%\STATE // 
%\STATE 
\STATE \textbf{GET} (key $k$) %\COMMENT{ }

\STATE \hspace{3mm}  send $\langle \textsc{GetReq} \ k, GST_c\rangle$  to server
\STATE \hspace{3mm}  receive $\langle \textsc{GetReply} \ v, ut, gst \rangle$
\STATE \hspace{3mm}   $DT_c \leftarrow max(DT_c,ut)$
\STATE \hspace{3mm}  $GST_c \leftarrow max(GST_c,gst)$
\RETURN $v$

\STATE \vspace{5mm} \textbf{PUT} (key $k$, value $v$)
\STATE \hspace{3mm} send $\langle \textsc{PutReq} \ k,v,DT_c \rangle$ to server
\STATE \hspace{3mm} receive $\langle \textsc{PutReply} \ ut \rangle$ 
\STATE \hspace{3mm} $DT_c \leftarrow max(DT_c,ut)$

\end{algorithmic}
}
\end{algorithm}

\subsection{Server Side}
In this section we focus on the server side of the protocol. We have $M$ replicas (i.e., $M$ data centers) each of which with $N$ partitions. We denote the $n$th partition in $m$th replica by $p^m_n$. Table \ref{table:serverV} shows variables that define the state of partition $p^m_n$. $PC^m_n$ and $HLC^m_n$ are the physical clock and the HLC of partition $p^m_n$, respectively. $VV^m_n$ is vector with size $N$ that maintains the latest (HLC) timestamps received from other replicas by partition $p^m_n$. Specifically, $VV^m_n[k]$ is the latest timestamp received from server $p^k_n$. $LST^m_n$ is minimum timestamp that $p^m_n$ has received from its peers in other data centers. In other words, $LST^m_n$ maintains the minimum value in $VV^m_n$. Servers share their $LST^m_n$ with each other, and compute GST as the minimum of LSTs. $GST^m_n$ is the GST computed in server $p^m_n$.

\begin{center}
\begin{table} [h]

\caption{Variables maintained by server $p^m_n$}
\label{table:serverV}
\begin{tabular}{ |l|l|l| } 
\hline
\textbf{Symbol} & \textbf{Definition} \\
\hline
$PC^m_n$ &  current physical clock time at $p^m_n$\\ 
$HLC^m_n$ &  current hybrid logical clock time at $p^m_n$\\ 
$VV^m_n$ & version vector of $p^m_n$ , with $M$ elements \\ 
$LST^m_n$ & local stable time at $p^m_n$ \\ 
$GST^m_n$ & global stable time at $p^m_n$ \\ 
%$DSV^m_n$ & data center stable vector at $p^m_n$ \\
%$LRT^m_n$ & Last (physical) time that server $p^m_n$ sent replicate or heartbeat message\\ 
\hline
\end{tabular}
\end{table}
\end{center}

Table \ref{table:data} shows the information that is stored for each version. For each version, in addition to the key and value, we store some additional metadata including the (HLC) time of creation of the version $ut$, and the source replica $sr$ where the version has been written.

\begin{center}
\begin{table} 
\caption{Data item}
\label{table:data}
\begin{tabular}{ |l|l|l| } 
\hline
\textbf{Symbol} & \textbf{Definition} \\
\hline
$d$ & item tuple $\langle k, v, ut, , sr \rangle$ \\
$k$ & key \\ 
$v$ & value \\
$ut$ & update time \\
%$dv$ & dependency vector \\
$sr$ & source replica $id$ \\ 
\hline
\end{tabular}
\end{table}
\end{center}

Algorithm \ref{alg:server1} shows the first part of the server side operations of GentleRain+. Upon receiving a GET request ($\textsc{GetReq}$), the server finds the latest version of the requested key that is either written in the local data center, or its update time is less than the current GST, and then returns back this value together with its update time and GST to the client.

The main difference between GentleRain and GentleRain+ is how we process PUT requests. Upon receiving a PUT request by server $p^m_n$, the server first updates its HLC by calling function $updateHLCforPut$. This function updates the $HLC^m_n$ by the algorithm of the HLC. We pass $dt$ to $updateHLCforPut$, and $updateHLCforPut$ updates $HCL^m_n$ such that the new $HLC^m_n$ is higher than $dt$. After updating the $HLC^m_n$, the server updates its own entry in $VV^m_n$ with the $HLC^m_n$. Next, the server creates a new version for the item specified by the client and uses the current $HLC^m_n$ value for its timestamp. The server, next, sends back the assigned timestamp $d.ut$ to the client.  Note that in GentleRain, processing PUT operation is delayed deliberately if $dt$ sent by the client is higher than the current physical clock. Such delay is not needed in GentleRain+. 

Upon creating new version for an item in one data center, we should replicate the new version to other replicas in other data centers. Thus, we call $sendReplicate$ function that sends replicate messages to other data centers. Upon receiving a replicate message from server $p^k_n$, the receiving server $p^m_n$ adds the new item to the version chain of key $d.k$. The server also updates the entry for server $p^k_n$ in its version vector. Thus, it sets $VV^m_n[k]$ to $d.ut$. 

Algorithm \ref{alg:server2} shows the second part of the GentleRain+ protocol. The functions in Algorithm \ref{alg:server2} (except $updateHLC$) are the same as those of GentleRain except the use of HLC instead of physical clock. Heartbeat messages are sent by a server, if the server has not sent any replicate message for a certain time $\Delta$. The goal of heartbeat messages is updating the knowledge of the peers of an partition in other replicas (i.e., updating $VV$s). In addition, every $\theta$ time, we compute LST and GST. To compute GST, partitions needs to communicate their LST with each other. Broadcasting LSTs has a high overhead. Instead, we efficiently compute GST over a tree as describe in \cite{gentleRain}. 
\begin{algorithm} 
{
\caption{Server operations at server $p^m_n$ (part 1)}
\label{alg:server1}
\begin{algorithmic} [h]
\STATE \textbf{Upon} receive $\langle \textsc{GetReq} \ k\rangle$
\STATE \hspace{3mm} $GST^m_n \leftarrow max(GST^m_n, gst)$
\STATE \hspace{3mm}  obtain latest version $d$ from version chain of key $k$ s.t. 
\begin{itemize}
\item $d.sr = m$, or
\item $d.ut \leq GST^m_n$
\end{itemize}
\STATE \hspace{3mm} send $\langle\textsc{GetReply} \  d.v, d.ut, GST^m_n\rangle$ to client
%\STATE \hspace{3mm} $d.dv = \emptyset$

\STATE \vspace{5mm} \textbf{Upon}  receive $\langle \textsc{PutReq} \ k, v, dt \rangle$
\STATE \hspace{3mm}  $updateHCLforPut(dt)$
\STATE \hspace{3mm}  $VV^m_n  [m] \leftarrow HLC^m_n$
\STATE \hspace{3mm}  Create new item $d$
\STATE \hspace{3mm}  $d.k \leftarrow k$ 
\STATE \hspace{3mm}  $d.v \leftarrow v$
\STATE \hspace{3mm}  $d.ut \leftarrow HLC^m_n$
\STATE \hspace{3mm}  $d.sr \leftarrow m$ 
\STATE \hspace{3mm}  insert $d$ to version chain of $k$
\STATE \hspace{3mm}  send $\langle PutReply \ d.ut \rangle$ to client
\STATE \hspace{3mm}  sendReplicates($d$)

\STATE \vspace{5mm} \textbf{Upon}  receive $\langle \textsc{Replicate} \  d\rangle$ from $p^k_n$
%\STATE \hspace{3mm} remove any entry in $d.dv[i]$ if $d.dv[i] \leq DSV^m_n[i]$ 
\STATE \hspace{3mm}  insert $d$ to version chain of key $d.k$
\STATE \hspace{3mm} $VV^m_n  [k] \leftarrow d.ut$

\STATE \vspace{5mm} \textbf{sendReplicates($d$)}
\STATE \hspace{3mm}  \textbf{for} each server $p^k_n, k \in \{0 \ldots M-1\}, k \neq m$ \textbf{do}
\STATE \hspace{6mm}  send $\langle \textsc{Replicate} \ d \rangle$ to $p^k_n$

\STATE \vspace{5mm} \textbf{updateHLCforPut ($dt$)}
\STATE \hspace{3mm} $l’ = HLC^m_n.l$
\STATE \hspace{3mm} $HLC^m_n.l = max (l', PC^m_n, dt.l)$
\STATE \hspace{3mm} \textbf{if} $(HLC^m_n.l = l' = dt.l) \ \ HLC^m_n.c \leftarrow max(HLC^m_n.c ,dt.c)+1$
\STATE \hspace{3mm} \textbf{else if} $(HLC^m_n.l = l') \  \ HLC^m_n.c \leftarrow HLC^m_n.c + 1$
\STATE \hspace{3mm} \textbf{else if} $(HLC^m_n.l = l) \  \ HLC^m_n.c \leftarrow dt.c + 1$
\STATE \hspace{3mm} \textbf{else} $HLC^m_n.c = 0$

\end{algorithmic}
}
\end{algorithm}

\begin{algorithm} 
{
\caption{Server operations at server $p^m_n$ (part 2) (These functions are identical to those of \cite{gentleRain} except we use HLC instead of physical clock)}
\label{alg:server2}
\begin{algorithmic} [h]
\STATE \textbf{Upon} every $\Delta$ time
\STATE \hspace{3mm} \textbf{if} more than $\Delta$ time unite has passed since the last heartbeat or replicate message 
\STATE \hspace{6mm}  $updateHCL()$
\STATE \hspace{6mm}  $VV^m_n  [m] \leftarrow HLC^m_n$
 \STATE \hspace{6mm}  \textbf{for} each server $p^k_n, k \in \{0 \ldots M-1\}, k \neq m$ \textbf{do}
\STATE \hspace{9mm}  send $\langle \textsc{Heartbeat} \ HLC^m_n \rangle$ to $p^k_n$

\STATE \vspace{5mm} \textbf{Upon}  receive $\langle \textsc{Heartbeat} \ hlc \rangle$ from $p^k_n$
\STATE \hspace{3mm}  $VV^m_n  [k] \leftarrow hlc$

\STATE \vspace{5mm} \textbf{Upon}  every $\theta$ time
\STATE \hspace{3mm} $LST^m_n \leftarrow min_{i=1}^{M} (VV^m_n[i])$
\STATE \hspace{3mm} $GST^m_n \leftarrow min_{j=1}^{N} (LST^m_j)$

\STATE \vspace{5mm} \textbf{updateHLC ()}
\STATE \hspace{3mm} $l’ = HLC^m_n.l$
\STATE \hspace{3mm} $HLC^m_n.l = max(HLC^m_n.l, PC^m_n)$
\STATE \hspace{3mm} \textbf{if} $(HLC^m_n.l = l’)\ $ $HLC^m_n.c \leftarrow HLC^m_n.c + 1$
\STATE \hspace{3mm} \textbf{else} $HLC^m_n.c \leftarrow 0$

\end{algorithmic}
}
\end{algorithm}

\section{Experimental Results}
\label{sec:results}
We have implemented GenlteRain+ protocol in a distributed kay-value store called MSU-DB. MSU-DB is written in Java, and it can be downloaded from \cite{msudb}. MSU-DB uses Berkeley DB \cite{berkeleyDb} in each server for data storage and retrieval. For the server-to-server communication it relays on Netty \cite{netty}. For comparison purposes, we have implemented GentleRian in the same code base.

%In addition the improvement by GentleRain+ protocol, MSU-DB benefits from another improvement suggested by the authors of GentleRain in Section 5.2 of \cite{??} which is the use of a vector instead of a single GST. Specifically, instead of computing and communicating LST, the minimum of entries in VV, the servers communicate their VVs, and we compute an element-wise minimum, each entry for one data center. Now, a client stores a vector of dependencies such that each entry of this vector shows the maximum timestamp that the client has read from a certain data center. The client sends this vector together with PUT operations, and servers also include this vector in replicate messages. The servers then use this vector to decide weather a version should be visible to clients or not. Using this modification, for a version, we only waits for data centers from which a version has an actual dependency.

%We provide our experimental results in two sections. 
First, we investigate the overhead of using HLC. Next, we focus on the response time improvement for PUT operations resulted by using HLC instead of physical clocks. 
We run our experiments on Amazon AWS \cite{aws} on \texttt{c3.large} instances running Ubuntu 14.04. The specification of servers is as follows: 7 ECUs, 2 vCPUs, 2.8 GHz, Intel Xeon E5-2680v2, 3.75 GiB memory, 2 x 16 GiB Storage Capacity.

\subsection{Overhead of HLC}
In this section, we investigate the overhead of using HLC instead of physical clocks. We use the compact representation of HLC suggested in \cite{hlc}. Thus, we do not need to use two different variables for two parts of a HLC clock. Specifically, we replace the last 16 bits of the NTP clock (that is 64 bits long) by data necessary to model HLC. The remaining 48 bits provide precision of microseconds.
 
Using compact representation, HLC timestamps are that same as physical clock timestamps regarding storage and memory requirements. The only concern is the processing overhead of updating HLC timestamps for PUT operations. Although HLC update is a very simple operation, to make sure that it does not affect the throughput of the system we perform the following experiment: we run MSU-DB on a single server, and run enough clients to saturate the server with PUT operations. For each of these PUT operations, the server needs to update the HLC. We run the experiment for PUT operations with different value sizes and compute the throughput as the number of operation done in one second. 

We do the same experiment on GentleRain. Note that, running the experiment on a single server is in favor of GentleRain, as the wait period before PUT operation does not occur at all, and GentleRain can respond PUT operations immediately. We also implement the echo protocol in the same code base that simply echos the client messages. We compute the throughput of the echo server to investigate the processing power of our hardware and software platform. Table \ref{table:throughput} shows the results of our experiment. We observed that using HLC instead of physical clocks does not have any overhead on the throughput. 

\begin{center}
\begin{table} [h]
\caption{Throughput in Kop/sec for PUT operations with different value sizes.}
\label{table:throughput}
\begin{tabular}{ |l|l|l|l|l| } 
\hline
 & \textbf{Echo (16)} & \textbf{PUT (16)} & \textbf{PUT (128)} & \textbf{PUT (1K)} \\
\hline
GentleRain  & 51.612066	& 10.053134	& 7.12	& 1.5823334 \\
GentleRain+ & 51.612066 & 10.4336 & 7.0791333 & 1.5866 \\
\hline
\end{tabular}
\end{table}
\end{center}

\subsection{Response Time of PUT Operations}

In this section we study the effect of clock skew on the response time for PUT operations in GentleRain. We also investigate how HLC eliminates the negative effect of clock skew on the response time in GentleRain+.

To study the effect of clock skew on the response time accurately, we need to have a precise clock skew between servers. However, the clock skew between two different machines depends on many factors out of our control. To have a more accurate experiment, we consolidate two virtual servers on a single machine, and impose an artificial clock skew between them.  Then, we change the value of the clock skew and observe its effect of the response time for PUT operations. A client sends PUT requests to the servers in a round robin fashion. Since the physical clock of one of the servers is behind the physical clock of the other server, half of the PUT operations will be delayed by the GentleRain. On the other hand, GentleRain+ does not delay any PUT operation, and processes them immediately.  We compute the average response time for 1K PUT operations. Figure \ref{fig:cs_art} shows average response time for PUT operation in GentleRain grows linearly as the clock skew grows, while the average response time in GentleRain+ is independent of clock skew.

\begin{figure}
\begin{center}
\includegraphics[scale=0.6]{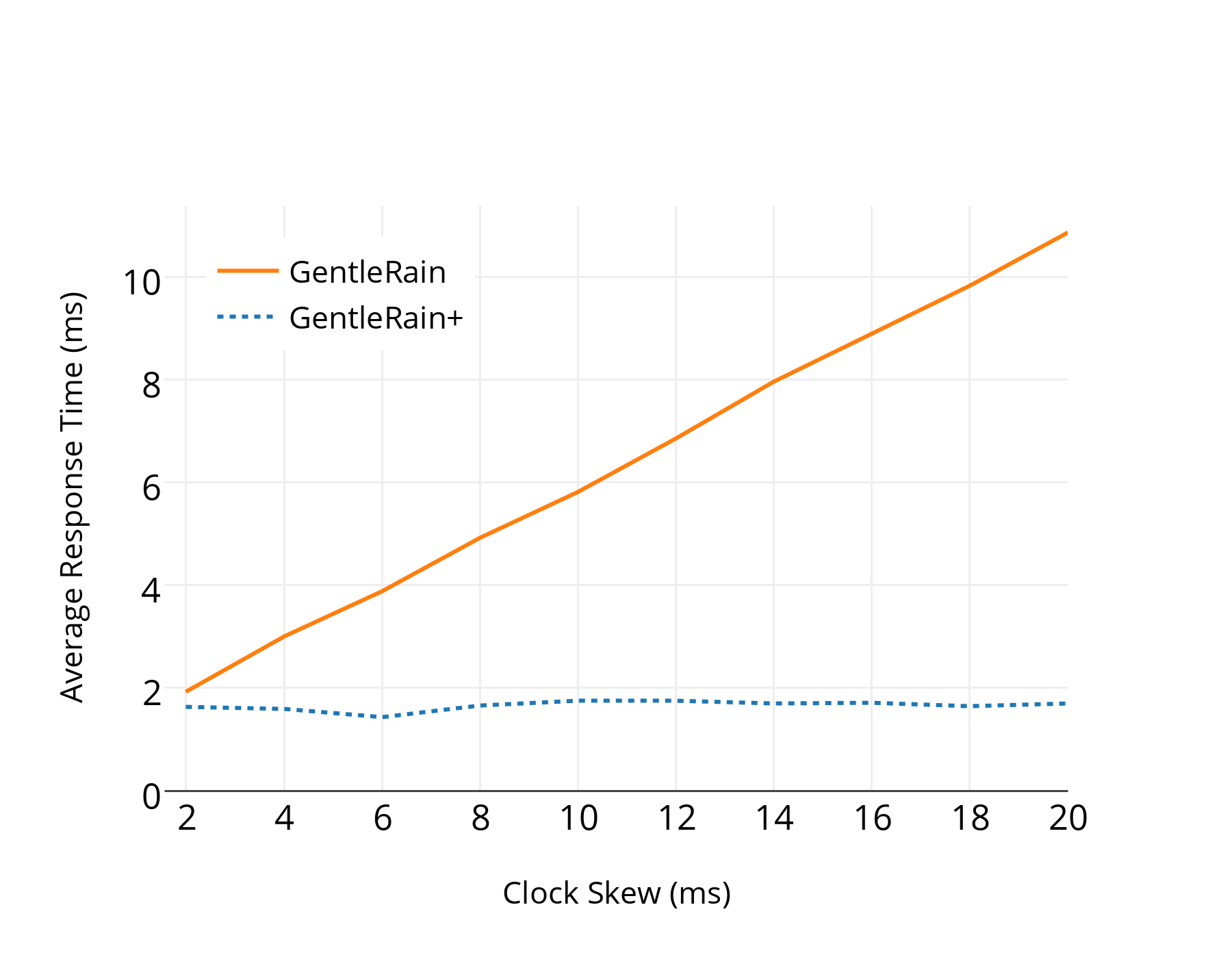}
\caption{The effect of clock skew on the average response time for PUT operations.}
\label{fig:cs_art}
\label{fig:CS_ART}
%\label{addStab}
\end{center}
\end{figure}

Next, we do the same the experiment when the servers are running on two different machines. This time, we do not impose any artificial clock skew. We run NTP \cite{ntp} on servers to synchronized physical clocks. Now, the client sends PUT requests to these servers. We repeat this experiment when the serves are running in different availability zones in the Virginia region. We always run servers in the same availability zone, and the client in a different availability zone in the same region. Table \ref{table:delay} shows average delay before PUT operations in GentleRain. We do not have such delay in GentleRain+. Figure \ref{fig:aws_art} shows how this delays negatively affect the average response time of GentleRain. The increase of average response time directly affect the clients. 

\begin{center}
\begin{table}
\caption{Average delay before PUT operations in GentleRain.}
\label{table:delay}
\begin{tabular}{ |l|l|l| } 
\hline
\textbf{Availability Zone} & \textbf{Average Delay (ms)} & \textbf{Ping from client to servers (ms)} \\
\hline
\texttt{us-east-1b}  & 7.75	& 1.261 \\
\texttt{us-east-1c}  & 	12.25 & 1.973 \\
\texttt{us-east-1e}  & 	12.5 & 0.794 \\
\hline
\end{tabular}
\end{table}
\end{center}
 
 \begin{figure}
\begin{center}
\includegraphics[scale=0.6]{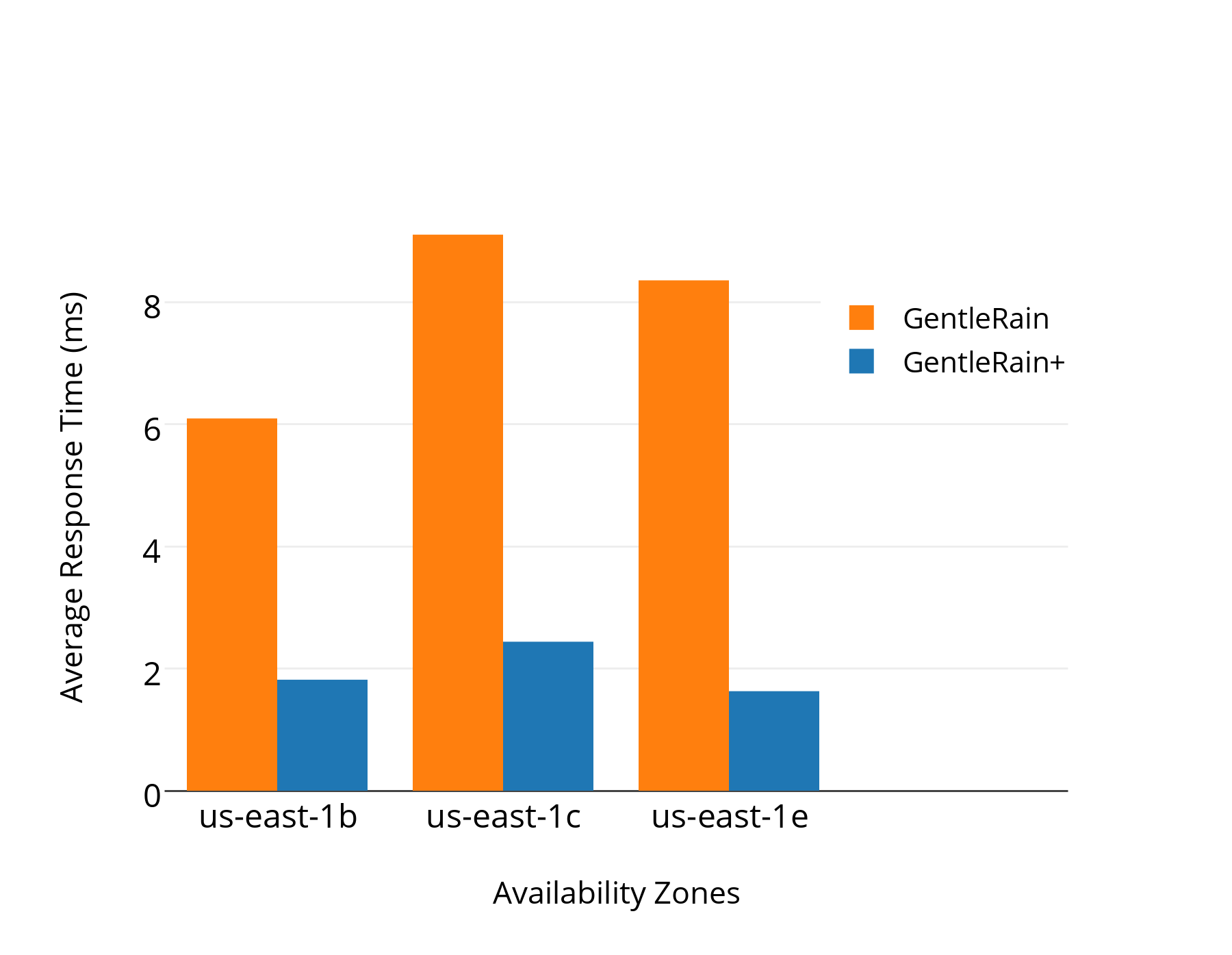}
\caption{The comparison of average response time between GentleRain and GentleRain+ in different availability zones.}
\label{fig:aws_art}
%\label{addStab}
\end{center}
\end{figure}

\section{Impossibility Results}
\label{sec:impossibility}

In this section, we want to focus on the locality of traffic assumption made by GentleRain. The CAP theorem \cite{cap} proves that achieving strong consistency, together with availability is impossible under the asynchronous network model. Availability means any client operation (PUT and GET) should be answered in a finite time. Asynchronous network model means that the communication among servers can be arbitrary delayed and the network can be partitioned for an unbounded duration. 

%A concrete example of an unbounded communication is the case of network partitions where a data center may be completely disconnected from rest of the network, because of the network failures. In that situation message can be lost between two partitions.

Although achieving strong consistency together with availability is impossible in asynchronous network model, there are several proposals in the literature to guarantee causal++ consistency together with availability under asynchronous network model. However, they assume the locality of traffic. In other words, they assume that a client accesses the same replica forever. In this section, we consider the \textit{moving client model} where client can access different keys from different replicas.

%, a client always query a single data center. In other words, a client never changes its replica for a certain key. We define the \textit{moving client model} as a model where a client may query more than one data center. 
 
Next, we claim that the locality of traffic assumption is essential for causal++ consistency, as we have the following impossibility result: 

\begin{theorem}
In an asynchronous network with moving client model, it is impossible to 
implement a replicated data store that guarantees following properties:

\begin{itemize}
\item Availability
\item Causal++ consistency
\end{itemize}
\end{theorem}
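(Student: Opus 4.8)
The plan is to adapt the classical CAP-style partition argument to the moving-client setting, exploiting the fact that a moving client can manufacture a cross-replica causal dependency that the immediate-visibility clause of causal++ consistency then cannot honor. I would fix two replicas $r_1$ and $r_2$ holding (among others) two keys $k_1$ and $k_2$, and consider an execution of the asynchronous adversary in which every message sent between $r_1$ and $r_2$ is delayed long past the times at which the operations below are answered. The asynchronous model explicitly permits this, and neither replica can distinguish such a delay from a slow-but-live peer, so no algorithm at $r_2$ may postpone a reply until the delay resolves.

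Next I would run a single client $c$ through the following schedule, assuming $c$ is the only writer of $k_1$ and $k_2$. First $c$ issues $PUT(k_2, v_2)$ at $r_1$; by availability this returns in finite time. Then $c$ moves and issues $PUT(k_1, v_1)$ at $r_2$; again availability forces a finite response. Because both writes are performed by the same client in program order with the $k_2$ write first, Definition \ref{def:happens} gives $PUT(k_2,v_2) \rightarrow PUT(k_1,v_1)$, and hence $v_1 \deps v_2$. Finally $c$ issues $GET(k_1)$ and then $GET(k_2)$, both at $r_2$.

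The contradiction then falls out in two steps. By the immediate-visibility clause of causal++ consistency (Definition \ref{def:causal++}), since $v_1$ was written at $r_2$ and $c$ is the only writer of $k_1$, the $GET(k_1)$ at $r_2$ returns $v_1$, so $v_1$ is visible to $c$. Causal+ consistency, applied to $v_1 \deps v_2$, now requires the subsequent $GET(k_2)$ to return a version $v_2'$ with $v_2' = v_2$ or $f(v_2, v_2') = v_2'$. But $r_2$ has never received the replicate message carrying $v_2$ (its link to $r_1$ is still delayed), so the only version of $k_2$ it can return is the initial one; this is not $v_2$, and since a genuine write dominates the initial value it does not $f$-dominate $v_2$ either. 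This violates causal+ consistency, contradicting the assumed causal++ consistency.

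The main obstacle, and where I would spend the most care, is pinning down the conflict-resolution function $f$ so that the version available at $r_2$ provably fails both disjuncts $v_2' = v_2$ and $f(v_2, v_2') = v_2'$. I would handle this by stipulating (as above) that no other client writes $k_2$ at $r_2$ during the execution, so that the only candidate is the initial version, and by invoking the convention recorded after the Conflict definition that $f$ returns the genuinely later of two ordered versions; a store for which this fails simply returns stale initial values and is precisely the trivial store that the motivation for causal++ already rules out as undesirable. With $f$ so constrained, the two-step argument above closes the contradiction and establishes that availability and causal++ consistency cannot coexist under the moving-client model in an asynchronous network.
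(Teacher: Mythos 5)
Your proof is correct and follows essentially the same partition-style argument as the paper: an unbounded message delay isolates the two replicas, availability forces every operation to be answered, the immediate-visibility clause of causal++ forces a locally written version to be returned, and causal+ consistency is then contradicted because its dependency can never have reached the isolated replica. The only difference is cosmetic --- you use a single client that moves between its two writes (so the dependency itself spans the partition), whereas the paper keeps the writer stationary and moves a second, reading client to the cut-off replica --- and both your proof and the paper's handle the conflict-resolution technicality (excluding $f(v_2, v_2^0) = v_2^0$ for the initial version $v_2^0$) at the same level of informality.
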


\begin{proof}
We prove this by contradiction. Assume an algorithm $A$ exists that guarantees availability and causal++ consistency for conflict resolution function $f$ in asynchronous network with moving client model. We create an execution of $A$ that contradicts causal++ consistency. Assume a data store where each key is stored in at least two replicas $r$ and $r'$. Initially, the data center contains two keys $k_1$ and $k_2$ with versions $v_1^0$ and $v_2^0$. These versions are replicated on $r$ and $r'$. Next the system executes in the following fashion. 

%Suppose the data store contains two keys $k_1$ and $k_2$ and two clients $c$ and $c'$. Initially, the data store contains the versions $v_1^0$ for key $k_1$ and $v_2^0$ for $k_2$. Next, the system executes in the following fashion

%We assume that all messages between $r$ and $r'$ are lost.
\begin{itemize}
\item There is a partition between $r$ and $r'$, i.e., all future messages between them will be lost. 
\item $c$ performs $PUT(k_1, v_1^1)$ on replica $r$.
\item $c$ performs $PUT(k_2, v_2^1)$ on replica $r$. 
\item $c'$ performs $GET(k_2)$ on replica $r$. Value returned is $v_2^1$
\item $c'$ moves to replica $r'$.
\item $c'$ performed $GET(k_1)$ on replica $r'$. Value returned is $v_1^0$.

\end{itemize}

%Now, client $c$ first writes version $v_1$ for key $k_1$ with initial value $v_0$ such that $f(v_0, v_1) = v_1$. The client $c$ then writes version $v_2$ for key $k_2$. Next, client $c'$ first reads $k_2$ and then reads $k_1$. 

%Next, we consider the case where client $c'$ moves to replica $r'$

%Under moving client model, we assume $c'$ reads $k_1$ from $r$, and reads $k_2$ from $r'$. According the the second constraint of Definition \ref{def:causal++}, version $v_2$ of $k_2$ must be returned to client $c'$, as this version is written in replica $r$. Now, when client ask replica $r'$ to read $k_1$, according to availability assumption, algorithm $A$ must answer the client request in finite time. However, because all message between $r$ and $r'$ are lost, there is no way for $r'$ to know version $v_1$. 

Thus, $v_2^1$ is visible to $c'$, but its causally dependent version $v_1^1$ is not visible. One the other hand, there is no $v_3$ on replica $r'$ such that $f(v_1^1, v_3) = v_3$. Thus, causal++ consistency is violated (contradiction).
\end{proof}

\section{Discussion: Why Hybrid Logical Clocks}
\label{sec:discussion}
The main property of logical clock proposed by Lamprot \cite{lamport} is that we can use logical clock to capture causality between events. Specifically, if event $e$ happens-before $f$ (see Definition \ref{def:happens}), then logical clock timestamp assigned to $e$ is smaller than logical clock timestamp assigned to $f$. Although logical clocks help us to capture causality between events, using them in the distributed data store system has some complications. First, as mentioned in \cite{gentleRain}, using logical clocks can cause clocks at different servers move at very different rates. Specifically, the logical clock of a server with higher event rate moves faster than the logical clock of a server with smaller event rate. When clocks of different servers moves at different speed, the visibility of update can be delayed arbitrary \cite{gentleRain}. Another problem of logical clocks is the timestamp assigned to version has not meaningful relation with a \textit{real clock}. Thus, if we use logical clocks for timestamping versions, we have no way to access a certain version of a key or snapshot of the system at the certain physical time. 

Physical clocks, on the other hand, do not have the above mentioned problem of logical locks. Specifically, they advance spontaneously. Thus, unlike logical clocks, even if no event occurs at a server, its clock advances by its own. Another advantage of physical clock is that if use them to timestamp versions, we can access the data of our system at a desired physical time. Although physical clocks have these benefits, they are not as efficient as logical clocks for capturing causal relation between events. In particular, if we timestamp events with physical clocks as soon as they occur, then it is not guaranteed that if event $e$ happens-before $f$ timestamp of $e$ is smaller than timestamp of $f$. To solve this problem, the approach of GentelRain is delaying write operations, but it can leads to increasing response time when clock synchronization is not good enough. Thus, it makes it sensitive on clock synchronization. 

Hybrid logical clocks combine the benefit of both logical clocks and physical clocks. Specifically, at one hand, just like physical clocks, it advances spontaneously, and it also is close to the real clock. On the other hand, without delaying event, it guarantees logical clock property that is if event $e$ happens-before $f$ timestamp of $e$ is smaller than timestamp of $f$. 
\section{Conclusion}
\label{sec:conclusion}

In the paper we provided an improvement over GentleRain protocol to make it more robust on clock anomalies. The correctness of GentleRain relays on monotonic clocks. Thus, if =clocks go backward, the correctness of GentleRain is violated. In addition, GentleRain may delay write operation, because of clock skew between servers. When clocks are synchronized with each other this delay does not occur. However, if for any reason clocks become unsynchronized, the forced delay adversely affect the response time of write operations. Thus, both correctness and performance of GentleRain relay on the synchronized monotonic clocks. Our improvement over GentleRain, name GentelRain+, solves both problems by using HLC instead of physical clock. Thus, if clocks go backward the correctness is still satisfied. In addition, clock drift does not affect the performance of the system at all, as all operations are processed immediately unlike GentleRain. 

We implemented GentelRain+ protocol in a distributed key-value store called MSU-DB, and compared it with GentleRain. As expected, GentleRain+ provides shorter response time for write operations comparing to GentleRain. We also did experiments to investigate the overhead of using HLC instead of physical clock. We observed using HLC does not have any processing overhead comparing to physical clocks. In addition, as we used compact representation of HLC, the memory and storage overhead of HLC is exactly the same as physical clock. Thus, GentleRain+ make GentleRain more robust at no cost. 

Finally, we provided an impossibility result which states in presence of network partitions the locality of traffic is necessary, to have an always available causally consistent data store that immediately makes local updates visible. 
\bibliographystyle{unsrt}
\bibliography{bib}

\end{document}